\newcommand\indexVar{k}
\newcommand\lab{lab}
\renewcommand{\intersect}{\mathrel{\sqcap}}
\renewcommand{\union}{\mathrel{\sqcup}}
\newcommand{\m}[1]{\mathsf{#1}}
\newcommand{\mb}[1]{\mbox{\bf #1}}
\newcommand{\semi}{\mathrel{;}}
\title{Intersections and Unions of Session Types}
\author{Co\c{s}ku Acay
\institute{Carnegie Mellon University\\
Pennsylvania, USA}
\email{cacay@cmu.edu}
\and
Frank Pfenning
\institute{Carnegie Mellon University \\
Pennsylvania, USA}
\email{fp@cs.cmu.edu}
}
\begin{document}
\maketitle

\begin{abstract}
  Prior work has extended the deep, logical connection between the linear sequent calculus and session-typed message-passing concurrent computation with equi-recursive types and a natural notion of subtyping. In this paper, we extend this further by intersection and union types in order to express multiple behavioral properties of processes in a single type. We prove session fidelity and absence of deadlock and illustrate the expressive power of our system with some simple examples. We observe that we can represent internal and external choice by intersection and union, respectively, which was previously suggested in \cite{CastagnaDGP09,Padovani10} for a different language of session types motivated by operational rather than logical concerns.
\end{abstract}

\section{Introduction}

Prior work has established a Curry-Howard correspondence between intuitionistic linear sequent calculus and session-typed message-passing concurrency \cite{CairesP10, PfenningG15, Honda93}. In this formulation, linear propositions are interpreted as session types, proofs as processes, and cut reduction as communication. Session types are assigned to channels and prescribe the communication behavior along them. Each channel is offered by a unique process and used by exactly one, which is ensured by linearity. When the behavior along a channel $c$ satisfies the type $A$ and $P$ is the process that offers along $c$, we say that $P$ provides a session of type $A$ along $c$.

In the base system, each type directly corresponds to a process of a certain form. For example, a process providing the type $A \tensor B$ first sends out a channel satisfying $A$, then acts as $B$. Similarly, a process offering $\terminate$ sends the label $\irb{end}$ and terminates. We call these \emph{structural types} since they correspond to processes of a certain structure. In this paper, we extend the base type system with intersections and unions. We call these \emph{property types} since they do not correspond to specific forms of processes in that any process may be assigned such a type. In addition, if we interpret a type as specifying a property, then intersection corresponds to satisfying two properties simultaneously and union corresponds to satisfying one or the other.

Our goal is to show that the base system extended with intersection, unions, recursive types, and a natural notion of subtyping is type-safe. We do this by proving the usual type preservation and progress theorems, which correspond to session fidelity and deadlock freedom in the concurrent context. In the presence of a strong subtyping relation and transparent (i.e.\ non-generative)  equi-recursive types, intersections and unions turn out to be powerful enough to specify many interesting communications behaviors, which we demonstrate with examples analogous to those in functional languages \cite{FreemanP91,Dunfield03}.

Our contributions are summarized below:
\begin{itemize}
  \item We introduce intersection and union types to a session-typed concurrent calculus and prove session fidelity and deadlock freedom.
  \item We give a simple and sound coinductive subtyping relation in the presence of equi-recursive types, intersections, and unions reminiscent of Gentzen's multiple conclusion sequent calculus \cite{Gentzen35, Girard87}.
  \item We show how intersections and unions can be used as refinements of recursive types in a linear setting.
  \item We show decidability of subtyping and present a system for algorithmic type checking.
  \item We demonstrate how internal and external choice can be understood as singletons interacting with intersection and union.
\end{itemize}

An extended version of this paper can be found at \cite{Acay16}.

\section{From Linear Logic to Session Types}
\label{base}
We give only a brief review of linear logic and its connection to session types here. Interested readers are referred to \cite{CairesP10, PfenningG15, Honda93}. The key idea of linear logic is to treat logical propositions as resources: each must be used exactly once in a proof. According to the Curry-Howard isomorphism for intuitionistic linear logic, propositions are interpreted as session types, proofs as concurrent processes, and cut-elimination steps as communication. For this correspondence, hypotheses are labelled with channels (rather than with variables). We also assign a channel name to the conclusion so that a process \emph{providing} a session can refer to the same channel name as process \emph{using} it. This replaces the usual notion of \emph{duality} in classical session-typed calculi~\cite{Honda93} and gives us the following form for typing judgments:
$$ \typeD {c_1 : A_1, \ldots, c_n : A_n} P c A$$
which should be interpreted as ``\emph{$P$ provides along the channel $c$ the session $A$ using channels $c_1, \ldots, c_n$ (linearly) with their corresponding types}''. We assume $c_1, \ldots, c_n$ and $c$ are all distinct so that when a channel $c$ occurs in $P$, its reference is unambiguous.  We abbreviate hypotheses using $\ctx$,
$\ctx'$, etc.

Two key rules explaining this judgment are \emph{cut} and \emph{identity}.  Logically, cut means that if we can prove $A$, then we can use it as a resource in the proof of some other proposition $C$.  Operationally, it corresponds to \emph{parallel composition} with a private shared variable for communication.
$$ \infer[\cut]{ \typeD {\ctx, \ctx'} {\tspawn{c}{P_c}{Q_c}} {d} {D} }
    { \typeD {\ctx} {P_c} {c} {A}
    & \typeD {\ctx', c : A} {Q_c} {d} {D}
    }
$$
We write $\tspawn{c}{P_c}{Q_c}$ instead of the customary $(\nu c)(P_c \mid Q_c)$ because it is more readable in actual programs.  The subscript here indicates that $c$ is a bound variable and occurs in both $P_c$ and $Q_c$.

Logically, the identity states the fundamental principle that the resource $A$ can be used to prove the conclusion $A$.  Operationally, it corresponds to \emph{forwarding}: we provide a session along channel $c$ by forwarding to $d$.  Note that forwarding does not have a direct correspondence in the $\pi$-calculus but can be implemented in terms of other primitives (for example, $!(c(d).\overline{c}\langle d\rangle + d(c).\overline{d}\langle c\rangle)$).
$$
  \infer[\id]{ \typeD {d : A} {\tfwd{c}{d}} {c} {A} }
    {}
$$
Forwarding is employed in session-typed programming with surprising frequency as will be evident from the examples we provide.

% Each process offers along a specific channel, and in the linear setting, each channel must be used exactly in one place. Processes cannot rename channels, which means we can treat channel names as unique process identifiers. It is important to note that linearity, or rather the lack of contraction, is crucial to the correctness of the system. Session types describe the behavior of processes and evolve over time as these processes communicate. If we were able to duplicate premises, we would have access to types that processes no longer satisfy, and could easily violate type safety.

Cut and identity are general constructs, independent of any particular proposition.  The isomorphism takes shape once we work out the interpretations of all of the connectives of linear logic as \emph{session types}.  We foreshadow the operational interpretation from the perspective of the provider of a session:

\begin{center}
\begin{tabular}{l c l l}
  $A, B, C$ & ::= & $\terminate$        & send \texttt{end} and terminate \\
            & $|$ & $A \tensor B$       & send channel of type $A$ and continue as $B$ \\
            & $|$ & $\internals{A}{I}$  & send $\lab_i$ and continue as $A_i$ for some $i \in I$\\
            & $|$ & $A \lolli B$        & receive channel of type $A$ and continue as $B$ \\
            & $|$ & $\externals{A}{I}$  & receive $\lab_i$ and continue as $A_i$ for some $i \in I$
\end{tabular}
\end{center}

In this paper, we do not need ${!}A$.  Instead of replication, we use recursive types to describe recurring behavior. We also generalize the binary $A \oplus B$ to $\internals{A}{I}$ and $A \& B$ to $\externals{A}{I}$, which is in the tradition of much prior work on session types and makes programs more readable. Here, $I$ is a \emph{finite} non-empty set of distinct labels whose order does not matter. We call $\internals{A}{I}$ an internal choice since the provider selects which branch to take (i.e.\ the provider sends the label), and $\externals{A}{I}$ an external choice since the client makes the decision.

\subsection{Process Expressions}
\label{process-expressions}

The processes (or proof terms) corresponding to these types are given below with the sending construct followed by the receiving construct. The notation $P_x$ emphasizes the scope of a bound variable $x$ in the cut, send, and receive constructs.  As a shorthand for substitution, we write $P_a$ for $\subst a x P$, the capture-avoiding substitution of $a$ for $x$ in $P$.
\begin{center}
\begin{tabular}{l c l@{\hspace{2em}} l}
  $P, Q, R$ & ::= & $\tspawn{x}{P_x}{Q_x}$     & cut (spawn) \\
            & $|$ & $\tfwd c d$                & id (forward) \\
            & $|$ & $\tclose c \mid \twait c P$  & $\terminate$ \\
            & $|$ & $\tsend{c}{y}{P_y}{Q} \mid \trecv{x}{c}{R_x}$ & $A \tensor B,$ $A \lolli B$ \\
            & $|$ & $\tselect{c}{}{P} \mid \tcase c {\tbranches Q I}$  & $\externals A I,$ $\internals A I$
\end{tabular}
\end{center}

\par Because a process $P$ always provides a session along a unique channel, we sometimes identify a process with the channel along which it provides. Keeping this in mind, the intuitive readings of process expressions are as follows. $\tspawn{x}{P_x}{Q_x}$ creates a fresh channel $c$, spawns a new process $P_c$ providing along the channel $c$, and $Q_c$ uses this new channel. The forwarding process $\tfwd c d$ terminates by globally identifying channels $c$ and $d$.  Further communication along $c$ will instead take place along $d$. A possible implementation of a forwarding process might tell the two end-points to communicate with each other and itself terminate, effectively, tying the two ends together and stepping out of the way. Process $\tclose c$ sends the token \texttt{end} and terminates, whereas the matching $\twait c P$ waits for \texttt{end} along $c$ and continues as $P$.  Processes $\mathrm{send}$ and $\mathrm{recv}$ are used for communicating channels along channels. In the case of $\tsend c y {P_y} Q$, a new channel $d$ is created and a process $P_d$ is spawned, but unlike cut, the continuation ($Q$) cannot refer to the new channel $d$, since it is sent along $c$ to be used by a different process. Finally, $\tselect c {} P$ sends $\lab$ along $c$, and $\tcase c {\tbranches Q I}$ receives a label along $c$ and branches on it. These are used to determine which branches to take in internal and external choices.

\subsection{Recursively Defined Types and Processes}

For practical programming, we of course need recursive types and recursively defined processes. Usually, this is incorporated into the type language in a local way by introducing a new construct $\recursive{t}{A_t}$ and identifying $\recursive{t}{A}$ with its unfolding $\subst{\recursive{t}{A}}{t}{A}$ in the style of Amadio and Cardelli~\cite{AmadioC91}. This makes it harder to incorporate mutual recursion, however, requiring type level pairs~\cite{Stone05un} for a fully formal treatment. We therefore go with the much simpler approach of using a global signature $\recCtx$ of mutually recursive \emph{type definitions} and \emph{process definitions}. Type definitions are straightforward with the form $t = A$. They add defined type names as a new alternative to possible types. The fact that type definitions are transparent, that is, the fact that they are identified with their definition (or unfolding) without process-level terms to witness the isomorphism, corresponds to an equi-recursive interpretation of types. Process definitions have the form $X :: (c:A) = P_c$ where $P$ provides session $A$ along $c$.
$$ \recCtx ::= \cdot \mid \recCtx, t = A \mid X ::(c:A) = P_c $$
Note that $c$ is a bound variable with scope $P$. Formally, our judgments are now decorated with a fixed signature $\recCtx$, but we might elide it since the signature usually does not change during a derivation.

The identification of defined type names and their unfoldings is normally formalized by defining a new type equivalence judgment and adding conversion rules to process typing. In our formalization, we will integrate this into the subtyping judgment, but we delay its discussion to \cref{original-subtyping}. For now, it suffices to think of type definitions as finite representations of possibly infinite trees. For example, the definition $t = t \tensor t$ stands for the tree $(\ldots \tensor \ldots) \tensor (\ldots \tensor \ldots)$. However, for this to make sense, we need to slightly restrict valid type definitions to those that are contractive~\cite{Stone05un, GayH05}. In our setting, this means every type definition must have a structural type at the top. For example, the previous example of $t = t \tensor t$ is contractive since its top level construct is $\tensor$ whereas $t = t$ and $t = u$ are not.

As an example for signatures with recursion, consider processes producing a sequence of the label $\m{succ}$ followed a single $\m{zero}$ (i.e.\ Peano naturals).
\[
\begin{array}{lcl}
\m{Nat} & = & \oplus\{\m{zero} : \terminate, \m{succ} : \m{Nat}\} \\
\m{z} :: (c:\m{Nat}) & = & c.\m{zero} \semi \mb{close}\; c \\
\m{s} :: (c:\m{Nat} \lolli \m{Nat}) & = & d \leftarrow \mb{recv}\; c \semi c.\m{succ} \semi c \leftarrow d
\end{array}
\]
The $\m{s}$ process here receives a channel $d$ representing a natural number, outputs one $\m{succ}$, and then behaves like $d$.  Effectively, this computes the successor.  To double a natural number, we need a recursive process definition.
\[
\begin{array}{l}
\m{double} :: (c:\m{Nat} \lolli \m{Nat}) = \\
\quad d \leftarrow \mb{recv}\; c \semi \\
\quad \mb{case}\; d\; \mb{of} \\
\quad\quad \m{zero} \rightarrow \mb{wait}\; d \semi c.\m{zero} \semi \mb{close}\; c \\
\quad\quad \m{succ} \rightarrow c.\m{succ} \semi c.\m{succ} \semi \\
\hspace{6em} e \leftarrow \m{double} \semi \mb{send}\; e\; (d' \leftarrow (d' \leftarrow d)) \semi c \leftarrow e
\end{array}
\]

The last line here implements the following sequence:
\[
\begin{array}{ll}
e \leftarrow \m{double} & \mbox{cut: start a new process $\m{double}$ which provides along the new channel $e$} \\
\mb{send}\; e\; (d' \leftarrow (d' \leftarrow d)) & \mbox{essentially, send $d$ along $e$ to the new process, but we have to use a forward} \\
c \leftarrow e & \mbox{now forward $e$ to $c$}
\end{array}
\]
This is a frequent pattern, so we make two small improvements in the surface syntax: (1) we parameterize process definitions with the channels that they provide and use, and (2) instead of creating a new process and then forwarding it, we directly implement the provider channel with a new call. Then the definition becomes the much more readable
\[
\begin{array}{l}
\m{double} : \m{Nat} \lolli \m{Nat} \\
c \leftarrow \m{double}\; d = \\
\quad \mb{case}\; d\; \mb{of} \\
\quad\quad \m{zero} \rightarrow \mb{wait}\; d \semi c.\m{zero} \semi \mb{close}\; c \\
\quad\quad \m{succ} \rightarrow c.\m{succ} \semi c.\m{succ} \semi \\
\hspace{6em} c \leftarrow \m{double}\; d
\end{array}
\]
In an actual implementation, the new definition would be ``desugared'' to the previous one.

\subsection{Type Assignment for Processes}

The typing rules for processes are derived from linear logic by decorating derivations with proof terms. The rules are given in \cref{session-assignment}. Note that in $\internal\Left$ and $\external\Right$, we allow unused branches in case expressions. This makes width subtyping easier, which is discussed in \cref{original-subtyping}. In addition, the $\m{def}$ rule implicitly renames the channel name in the signature to the one expected by the judgment.

In our simple language, checking that a type is valid, $\vdash_\recCtx A : \m{type}$, just verifies that all type names in $A$ are defined in $\recCtx$ and that $A$ is contractive.  A signature $\recCtx$ itself is checked with the rules below.  Note that we allow mutual recursion in the definitions which is witnessed by the fact that the signature $\recCtx$ is propagated identically everywhere.
\[
\infer[]{\vdash_\recCtx \emptyset}{}
\hspace{3em}
\infer[]{\vdash_\recCtx \recCtx', t = A}{\vdash_\recCtx \recCtx' &
\vdash_\recCtx A : \m{type}}
\hspace{3em}
\infer[]{\vdash_\recCtx \recCtx', X :: (c:A) = P_c}
{\vdash_\recCtx \recCtx' & \vdash_\recCtx A : \m{type} & \emptyset \vdash_\recCtx {P_c} :: (c:A)}
\]

\begin{rules}[session-assignment]{Type assignment for process expressions}
  % id and cut
  \infer[\id]{ \typeD {c : A} {\tfwd{d}{c}} {d} {A} }
    {}
  \and \infer[\cut]{ \typeD {\ctx, \ctx'} {\tspawn{c}{P_c}{Q_c}} {d} {D} }
    { \typeD {\ctx} {P_c} {c} {A}
    & \typeD {\ctx', c : A} {Q_c} {d} {D}
    }
  % Terminate
  \and \infer[\terminate\Right]{\typeD{\emptyCtx}{\tclose c}{c}{\terminate}}
    {}
  \and \infer[\terminate\Left]{\typeD{\ctx, c : \terminate}{\twait c P}{d}{A}}
    {\typeDJ{P}{d}{A}}
  % Tensor
  \and \infer[\tensor\Right]{\typeD{\ctx, \ctx'}{\tsend{c}{d}{P_d}{Q}}{c}{A \tensor B}}
    { \typeD{\ctx}{P_d}{d}{A}
    & \typeD{\ctx'}{Q}{c}{B}
    }
  \and \infer[\tensor\Left]{ \typeD{\ctx, c : A \tensor B}{\trecv{d}{c}{P_d}}{e}{E} }
    { \typeD{\ctx, d : A, c : B}{P_d}{e}{E} }
  % Internal choice
  \and \infer[\internal\Right]{\typeDJ { \tselect{c}{i}{P} } {c} {\internals{A}{I}} }
    { i \in I
    & \typeDJ{P}{c}{A_i}
    }
  \and \infer[\internal\Left]{ \typeD { \ctx, c : \internals{A}{I} } { \tcase{c}{\tbranches{P}{J}} } {d} {D} }
   { I \subseteq J
   & \typeD{\ctx, c : A_k}{P_k}{d}{D}~\text{for}~k\in I
   }
  % Implication
  \and \infer[\lolli\Right]{ \typeD{\ctx}{\trecv{d}{c}{P_d}}{c}{A \lolli B} }
    { \typeD{\ctx, d : A}{P_d}{c}{B} }
  \and \infer[\lolli\Left]{\typeD{\ctx, \ctx', c : A \lolli B}{ \tsend{c}{d}{P_d}{Q} } {e}{E}}
    { \typeD{\ctx}{P_d}{d}{A}
    & \typeD{\ctx', c : B}{Q}{e}{E}
    }
  % External choice
  \and \infer[\external\Right]{ \typeDJ { \tcase{c}{\tbranches{P}{I}} } {c} {\externals{A}{J}} }
   { J \subseteq I
   & \typeDJ{P_k}{c}{A_k}~\text{for}~k\in J
   }
  \and \infer[\external\Left]{\typeD{\ctx, c : \externals{A}{I}} { \tselect{c}{i}{P} } {d} {D}}
    { i \in I
    & \typeD{\ctx, c : A_i}{P}{d}{D}
    }
  \and \infer[\m{def}]{ \typeD {\emptyset} {X} {d}{A} }
    { X = P :: (c : A) \in \recCtx }
\end{rules}

\subsection{Subtyping}
\label{original-subtyping}

Gay and Hole \cite{GayH05} add coinductive subtyping (denoted $A \sub B$ in this paper) to their system in order to admit width and depth subtyping for $n$-ary choices, which are standard for record-like and variant-like structures. In our system, subtyping also doubles as a convenient way of identifying a recursive type and its unfolding using the following rules:
\begin{mathpar}
  \infer=[\m{Def}\Right]{A \sub_\recCtx t}
    {(t = B \in \recCtx) & A \sub_\recCtx B}
  \and \infer=[\m{Def}\Left]{t \sub_\recCtx B}
    {(t = A \in \recCtx) & A \sub_\recCtx B}
\end{mathpar}
Double lines here indicate that the rules should be interpreted coinductively as is common with theories using equi-recursive types. We will not go into the details of Gay and Hole's system since we will switch to a different relation in the next section anyway. Either way, we relate subtyping to process typing with subsumption rules:
\begin{mathpar}
  \infer[\irb{Sub}\Right]{\typeRecDJ{P}{c}{A}}
    {\typeRecDJ{P}{c}{A'} & A' \sub_\recCtx A}
  \and \infer[\irb{Sub}\Left]{\typeRecD {\ctx, c : A} \recCtx P d B}
    {\typeRecD{\ctx, c : A'} \recCtx P d B & A \sub_\recCtx A'}
\end{mathpar}

\subsection{Process Configurations}

So far in the theory, we have only considered processes in isolation. In this section, we introduce process configurations in order to talk about the interactions between multiple processes. A process configuration, denoted by $\config$, is simply a set of processes where each process is labelled with the channel along which it provides. We use the notation $\proc c P$ for labelling the process $P$, and require all labels in a configuration to be distinct.

With the above restriction, each process offers along a specific channel and each channel is offered by a unique process. Since channels are linear resources in our system, they must be used by exactly one process. In addition, we do not allow cyclic dependence, which imposes an implicit forest (set of trees) structure on a process configuration where each node has one outgoing edge
% (including the root nodes which have ``phantom'' edges)
and any number of incoming edges that correspond to channels the process uses. This observation suggests the typing rules below, which mimic the structure of a multi-way tree. Note that the definition is well founded since the size of the configuration gets strictly smaller.
\begin{mathpar}
  \infer[\confZero]{\providesCtx{\emptyConfig}{\emptyCtx}} {}
  \and \infer[\confOne]{\provides{\config, \proc c P}{c}{A}}
   { \providesCtx \config \ctx
   & \typeD \ctx P c A
   }
   \and \infer[\confN]{\providesCtx {\config_1, \ldots, \config_n} {\parens{c_1 : A_1, \ldots, c_n : A_n}}}
    { \provides {\config_i} {c_i} {A_i}~\text{for}~i \in \set{1, \ldots, n}
    & n > 1
    }
\end{mathpar}

\subsection{Operational Semantics}

A process configuration evolves over time when a process takes a step, either by spawning a new process ($\cut$), forwarding ($\id$) or when two matching processes communicate. Our configurations are sets, so order is not significant when we match the left-hand sides against the configuration $\Omega$.  When we require a new name to be chosen, it must not already be offered by some process in the configuration.  These rules are an example of
a \emph{substructural operational semantics}~\cite{Simmons12}, presented in the form of a \emph{multiset rewriting system}~\cite{CervesatoS09}.

\[
\begin{array}{lcll}
\Omega, \proc{c}{\tfwd{c}{d}} & \longrightarrow & [d/c]\Omega \\
\Omega, \proc{c}{\tspawn{x}{P_x}{Q_x}} & \longrightarrow &
\Omega, \proc{a}{P_a} , \proc{c}{Q_a} & \mbox{($a$ fresh)} \\
\Omega, \proc{c}{X} & \longrightarrow & \multicolumn{2}{r}{\Omega, \proc{c}{[c/d]P} \quad \mbox{($X = P :: (d:A) \in \recCtx$)}} \\
\Omega, \proc{c}{\tclose{c}} , \proc{e}{\twait{c}{P}}
    & \longrightarrow & \Omega, \proc{e}{P} \\
\Omega, \proc{c}{\tsend{c}{x}{P_x}{Q}} , \proc{e}{\trecv{x}{c}{R_x}}
    & \longrightarrow & \\
& \multicolumn{3}{r}{\Omega,  \proc{a}{P_{a}} , \proc{c}{Q} , \proc{e}{R_{a}} \quad \mbox{($a$ fresh)}} \\
% \Omega, \proc{c}{\m{send}\; c\; d \semi Q} , \proc{e}{\trecv{x}{c}{R_x}}
%     & \longrightarrow & \Omega,  \proc{c}{Q} , \proc{e}{R_{d}} \\
\Omega, \proc{c}{\tselect{c}{i}{P}} , \proc{e}{\tcase{c}{\tbranches Q I}}
& \longrightarrow & \Omega, \proc{c}{P} , \proc{e}{Q_i}  & \mbox{($i \in I$)} \\
\Omega, \proc{c}{\trecv{x}{c}{P_x}} , \proc{d}{\tsend{c}{x}{Q_x}{R}} & \longrightarrow & \\
& \multicolumn{3}{r}{\Omega, \proc{c}{P_{a}} , \proc{a}{Q_a} , \proc{d}{R} \quad \mbox{($a$ fresh)}} \\
% \Omega, \proc{c}{\trecv{x}{c}{P_x}} , \proc{e}{\m{send}\; c\; d \semi R}
% & \longrightarrow & \Omega, \proc{c}{P_{d}} , \proc{e}{R} &  \\
\Omega, \proc{c}{\tcase{c}{\tbranches P I}} , \proc{e}{\tselect c i Q} 
& \longrightarrow & \Omega, \proc{c}{P_i} , \proc{e}{Q} & \mbox{($i \in I$)} \\
\end{array}
\]

This concludes the discussion of the base system. In the next section, we introduce intersections, unions, and a multiple-conclusion subtyping relation which constitute our main contributions.

\section{Intersections and Unions}
\label{refinements}

Recall our definition of process-level naturals $\nat$. One can imagine cases where we would like to know more about the exact nature of the natural. For example, if we are using a natural to track the size of a list, we might want to ensure it is non-zero. Sometimes, it might be relevant to track whether we have an even or an odd number. The system we have described so far turns out to be strong enough to describe all these \emph{refinements} as illustrated below
\[
\begin{array}{lcll}
  \m{Nat} & = & \internal\{\m{zero} : \terminate, \m{succ} : \m{Nat}\} \\[1em]
  \m{Pos} & = & \internal\{\m{succ} : \m{Nat}\} \\
  \m{Even} & = & \internal\{\m{zero} : \terminate, \m{succ} : \m{Odd}\} \\
  \m{Odd} & = & \internal\{\m{succ} : \m{Even}\} \\[1em]
\end{array}
\]
Recall also the definitions
\[
\begin{array}{l@{\hspace{2em}}l@{\hspace{2em}}l}
  \m{z} : \m{Nat} & \m{s} : \m{Nat} \lolli \m{Nat} & \m{double} : \m{Nat} \lolli \m{Nat} \\
  c \leftarrow \m{z} = & c \leftarrow \m{s}\; d = & c \leftarrow \m{double}\; d = \\
  \quad c.\m{zero} \semi \mb{close}\; c & \quad c.\m{succ} \semi c \leftarrow d & \quad \mb{case}\; d\; \mb{of} \\
& & \quad\quad \m{zero} \rightarrow \mb{wait}\; d \semi c.\m{zero} \semi \mb{close}\; c \\
& & \quad\quad \m{succ} \rightarrow c.\m{succ} \semi c.\m{succ} \semi \\
& & \hspace{6em} c \leftarrow \m{double}\; d
\end{array}
\]

% \begin{minipage}[t]{.33\textwidth}
%   \[
%   \begin{array}{l}
%   \m{z} :: (c:\m{Nat}) = \\
%   \quad c.\m{zero} \semi \mb{close}\; c
%   \end{array}
%   \]
% \end{minipage}
% \hfill
% \begin{minipage}[t]{.33\textwidth}
%   \[
%   \begin{array}{l}
%   \m{s} :: (c:\m{Nat} \lolli \m{Nat}) = \\
%   \quad d \leftarrow \m{recv}\; c \semi c.\m{succ} \semi c \leftarrow d
%   \end{array}
%   \]
% \end{minipage}
% \hfill
% \begin{minipage}[t]{.33\textwidth}
%   \[
%   \begin{array}{l}
%   \m{double} : \m{Nat} \lolli \m{Nat} \\
%   c \leftarrow \m{double}\; d = \\
%   \quad \mb{case}\; d\; \mb{of} \\
%   \quad\quad \m{zero} \rightarrow \mb{wait}\; d \semi c.\m{zero} \semi \mb{close}\; c \\
%   \quad\quad \m{succ} \rightarrow c.\m{succ} \semi c.\m{succ} \semi \\
%   \hspace{6em} c \leftarrow \m{double}\; d
%   \end{array}
%   \]
% \end{minipage}

Intuitively, it is easy to see that $\pos$, $\even$, and $\odd$ are all subtypes of $\nat$. We run into a problem when we try to implement the behavior described by these types, however. The $\m{s}$ process, for example, satisfies many properties: $\nat \lolli \nat$, $\pos \lolli \pos$, $\even \lolli \odd$, $\odd \lolli \even$ etc. Subtyping can be used to combine some of these (e.g.\ $\nat \lolli \pos$ for $\nat \lolli \nat$ and $\pos \lolli \pos$) but it is not expressive enough to combine all properties. An elegant solution is to add intersections to the type system.

\subsection{Intersection Types}
We denote the intersection of two types $A$ and $B$ as $A \intersect B$. A process offers an intersection type if its behavior satisfies both types simultaneously. Using intersections, we can assign the programs introduced in \cref{process-expressions} types specifying all behavioral properties we care about:
\[
\begin{array}{lcl}
\m{z} & : & \m{Nat} \intersect \m{Even} \\
\m{s} & : & (\m{Nat} \lolli \m{Nat}) \intersect (\m{Even} \lolli \m{Odd}) \intersect
(\m{Odd} \lolli \m{Even}) \\
\m{double} & : & (\m{Nat} \lolli \m{Nat}) \intersect (\m{Nat} \lolli \m{Even})
\end{array}
\]
Note that as is usual with intersections, multiple types are assigned to \emph{the same process}. Put differently, we cannot use two different processes or specify two different behaviors to satisfy the different branches of an intersection. This leads to the following typing rule: % on the right:
$$
  \infer[\intersect\Right]{\typeRecDJ{P}{c}{A \intersect B}}
    {\typeRecDJ{P}{c}{A} & \typeRecDJ{P}{c}{B}}
$$

When we are using a channel on the left that offers an intersection of two types, we know it has to satisfy both properties so we get to pick the one we want:
\begin{mathpar}
  \infer[\intersect\Left_1]{\typeRecD{\ctx, c : A \intersect B}{\recCtx}{P}{d}{D}}
    {\typeRecD{\ctx, c : A}{\recCtx}{P}{d}{D}}
  \and \infer[\intersect\Left_2]{\typeRecD{\ctx, c : A \intersect B}{\recCtx}{P}{d}{D}}
    {\typeRecD{\ctx, c : B}{\recCtx}{P}{d}{D}}
\end{mathpar}

It may seem as if the two left typing rules for intersection are somehow unnecessary: because of linearity, only one of $A$ or $B$ can be selected in any given derivation.  But process definitions are used arbitrarily often, essentially spawning a new process along a new linear channel at each use point, so we may need to select a different component of the type at each occurrence.  For example:

\[
\begin{array}{l}
  \m{s} : (\m{Even} \lolli \m{Odd}) \intersect (\m{Odd} \lolli \m{Even}) \quad \mbox{\it (from before)} \\[1ex]
  \m{s2} : \m{Even} \lolli \m{Even} \\[1ex]
  c \leftarrow \m{s2}\; d = \\
  \quad d_1 \leftarrow \m{s}\; d \hspace{5em} \mbox{\it (use $\m{s} : \m{Even} \lolli \m{Odd}$ by $\intersect\Left_1$)} \\
  \quad c \leftarrow \m{s}\; d_1 \hspace{5em} \mbox{\it (use $\m{s} : \m{Odd} \lolli \m{Even}$ by
$\intersect\Left_2$)}
\end{array}
\]

The standard subtyping rules are given below. It should be noted that the left typing rules above are derivable by an application of subsumption on the left using $\Sub{\intersect\Left_1}$ and $\Sub{\intersect\Left_2}$, so we will not explicitly add these to the final system. Also, we will have to modify the subtyping relation later in this section, so these rules are only a first attempt.

\begin{mathpar}
  \infer=[\Sub{\intersect}\Right]{A \sub B_1 \intersect B_2}
    {A \sub B_1 \and A \sub B_2}
  \and \infer=[\Sub{\intersect\Left_1}]{A_1 \intersect A_2 \sub B}
    {A_1 \sub B}
  \and \infer=[\Sub{\intersect\Left_2}]{A_1 \intersect A_2 \sub B}
    {A_2 \sub B}
\end{mathpar}

One final note about intersection types and recursion is that intersections are not considered structural types and thus do not contribute to contractiveness. That is, the type $t = t \intersect t$ is \emph{not} contractive.

\subsection{Union Types}

Unions are the dual of intersections and correspond to processes that satisfy one or the other property, and are written $A \union B$. We add unions because they are a natural extension to a  type system with intersections. We will also see how $n$-ary internal choice can be interpreted as
the union of singleton choices. Without them, our interpretation would only be half-complete since we could interpret external choice (with intersections) but not internal choice.

Being dual to intersections, the typing rules for unions mirror the typing rules for intersections: we have two right rules and one left rule, and this time the right rules are derivable from subtyping. The rules are given below:

\begin{mathpar}
  \infer[\union\Right_1]{\typeRecDJ{P}{c}{A \union B}}
    {\typeRecDJ{P}{c}{A}}
  \and \infer[\union\Right_2]{\typeRecDJ{P}{c}{A \union B}}
    {\typeRecDJ{P}{c}{B}}
  \and \infer[\union\Left]{\typeRecD{\ctx, c : A \union B}{\recCtx}{P}{d}{D}}
    {\typeRecD{\ctx, c : A}{\recCtx}{P}{d}{D} & \typeRecD{\ctx, c : B}{\recCtx}{P}{d}{D}}
\end{mathpar}

The right rules state the process has to offer either the left type or the right type respectively. The left rule says we need to be prepared to handle either type. It is interesting to observe that the usual problems with unions in functional languages do not arise in our setting. The natural left rule we give here (natural since it is dual to the right rule for intersection) has been shown to be unsound in functional languages \cite{Barbanera95ic}. One somewhat heavy solution limits the left rule to expressions in evaluation position \cite{DunfieldP04}. The straightforward left rule turns out to be already sound here, essentially due to linearity and the use of the sequent calculus.

The usual subtyping rules are given below. The $\union\Right$ rules are derivable by general subtyping, so they don't need to be explicitly added to the system.

\begin{mathpar}
  \infer=[\Sub{\union\Right_1}]{A \sub B_1 \union B_2}
    {A \sub B_1}
  \and \infer=[\Sub{\union\Right_1}]{A \sub B_1 \union B_2}
    {A \sub B_2}
  \and \infer=[\Sub{\union\Left}]{A_1 \union A_2 \sub B}
    {A_1 \sub B & A_2 \sub B}
\end{mathpar}

% Unions allow us to describe some interesting properties. For example, we can show that every natural is either even or odd:\todo{exactly which system is this provable in?}
% \[
% \begin{array}{l}
% \m{iso} : \m{Nat} \lolli (\m{Even} \union \m{Odd}) \\
% c \leftarrow \m{iso}\; d = \\
% \quad \mb{case}\; d\; \m{of} \\
% \qquad \m{zero} \rightarrow \mb{wait}\; d \semi c.\m{zero} \semi \mb{close}\; c \\
% \qquad \m{succ} \rightarrow c.\m{succ} \semi c \leftarrow \m{iso}\; d
% \end{array}
% \]
% This definition has to be recursive rather than a simple forward (essentially implementing a one-level unfold) since our system cannot prove directly that $Nat \sub Even \union Odd$.

For an example where unions are the most natural way to express a property, consider moving to a binary representation of natural numbers. We define the type of binary string where the least significant bit is sent first and the string of bits is terminated with $\m{eps}$:
\[
\begin{array}{lcl}
\m{Bits} & = & \internal\{\m{eps} : \terminate, \m{zero} : \m{Bits}, \m{one} : \m{Bits}\}
\end{array}
\]
We can define bit strings in standard form (no leading zeros) as follows:
\[
\begin{array}{lcl}
\m{Std} & = & \m{Empty} \union \m{StdPos} \\
\m{Empty} & = & \internal\{\m{eps} : \terminate\} \\
\m{StdPos} & = & \internal\{\m{one} : \m{Std}, \m{zero} : \m{StdPos}\}
\end{array}
\]
We are able to naturally express standard bit strings as either an empty bit string \emph{or} a positive one; expressing such types without unions would be cumbersome at the very least.%
\footnote{The acute reader might notice that $\m{Std}$ actually fails our simple contractiveness criteria. This does not lead to unsoundness, however, since its one level unfolding is contractive (i.e.\ inlining $\m{Empty}$ and $\m{StdPos}$ makes it contractive). We write it in closed form here for better readability, but one can consider this simple syntactic sugar. It is not too hard to formulate a sound contractiveness condition that allows such definitions, but we decided to err on the side of simplicity for this paper.}
Now we can write an increment function that preserves bit strings in standard form:
\[
\begin{array}{l}
\m{inc} : \m{Std} \lolli \m{Std} \intersect \m{StdPos} \lolli \m{StdPos} \intersect \m{Empty} \lolli \m{StdPos} \\[1ex]
c \leftarrow \m{inc}\; d = \\
\quad \mb{case}\; d\; \mb{of} \\
\qquad \m{eps} \rightarrow \mb{wait}\; d \semi c.\m{one} \semi c.\m{eps} \semi \mb{close}\; c \\
\qquad \m{zero} \rightarrow c.\m{one} \semi c \leftarrow d \\
\qquad \m{one} \rightarrow c.\m{zero} \semi \m{inc}\; d
\end{array}
\]
This example also demonstrates that for a recursively defined type we may need to specify more information than we ultimately care about, since checking this definition just against the type $\m{Std} \lolli \m{Std}$ will fail, and we need to assign the more specific type for the type checking to go through. This is because of the nature of our system which essentially requires the type checker to verify a fixed point rather than infer the least one. This has proven highly beneficial for providing good error messages even in the simpler case of pure subtyping, without intersections and unions~\cite{Griffith16phd}.

\subsection{Subtyping Revisited}

In line with our propositional interpretation of intersections and unions, one would naturally expect the usual properties of these to hold in our system. For example, unions should distribute over intersections and vice versa, that is, the following equalities should be admissible:
\begin{mathpar}
   (A_1 \union B) \intersect (A_2 \union B) \typeeq (A_1 \intersect A_2) \union B \\
   (A_1 \union A_2) \intersect B \typeeq (A_1 \intersect B) \union (A_2 \intersect B)
\end{mathpar}

Going from right to left turns out to be easy, but we quickly run into a problem if we try to do the other direction: whether we break down the union on the right or the intersection on the left, we always lose half the information we need to carry out the rest of the proof.\footnote{This issue does not come up in the other direction since intersection right and union left rules are invertible, that is, they preserve all information.}

Our solution is doing the obvious: if the problem is losing half the information, well, we should just keep it around. This suggests a system where the single type on the left and the type on the right are replaced with \emph{(multi)sets} of types. That is, instead of the judgment $A \le B$, we use a judgment of the form $A_1, \ldots, A_n \subA B_1, \ldots, B_n$, where the left of $\subA$ is interpreted as a conjunction (intersection) and the right is interpreted as a disjunction (union).%
\footnote{We use multisets rather than sets since types have nontrivial equality, so it is not obvious when we should combine them into one.
% Additionally, we do not want to worry about whether the rest of the context already had a given type before we add it in.
}
This results in a system reminiscent of \cite{Gentzen35, Girard87}. However, we take a slightly different approach since we are working with coinductive rules.

The rules are given in \cref{subtyping-multi}. We use $\typeList$ and $\typeListB$ to denote multisets of types. The intersection left rules are combined into one rule that keeps both branches around. The same is done with union right rules. Intersection right and union left rules split into two derivations, one for each branch, but keep the rest of the types unchanged. We can unfold a recursive type on the left or on the right. When we choose to apply a structural rule, we have to pick exactly one type on the left and one on the right with the same structure. We conjecture that matching multiple types might give us distributivity of intersection and union over structural types, although a naive extension along these lines fails to satisfy type safety.

\begin{rules}[subtyping-multi]{Subtyping with multiple hypothesis and conclusions; coinductively with
respect to a fixed signature $\eta$}
  % Intersection
  \infer=[\SubA{\intersect}\Right]{\typeList \subA \typeListB, A_1 \intersect A_2}
    {\typeList \subA \typeListB, A_1 \and \typeList \subA \typeListB, A_2}
  \and \infer=[\SubA{\intersect}\Left]{\typeList, A_1 \intersect A_2 \subA \typeListB}
    {\typeList, A_1, A_2 \subA \typeListB}
  % Union
  \\ \infer=[\SubA{\union}\Right]{\typeList \subA \typeListB, A_1 \union A_2}
    {\typeList \subA \typeListB, A_1, A_2}
  \and \infer=[\SubA{\union}\Left]{\typeList, A_1 \union A_2 \subA \typeListB}
    {\typeList, A_1 \subA \typeListB & \typeList, A_2 \subA \typeListB}
  % Structural
  \\ \infer=[\SubA{\terminate}]{\typeList, \terminate \subA \typeListB, \terminate}{}
  \and \infer=[\SubA\tensor]{\typeList, A \tensor B \subA \typeListB, A' \tensor B'}
    {A \subA A' & B \subA B'}
  \and \infer=[\SubA\internal]{\typeList, \internals A I \subA \typeListB, \internals {A'} J}
    { I \subseteq J
    & A_\indexVar \subA A_\indexVar'~\text{for}~\indexVar\in I
    }
  \and \infer=[\SubA\lolli]{\typeList, A \lolli B \subA \typeListB, A' \lolli B'}
    {A' \subA A & B \subA B'}
  \and \infer=[\SubA\external]{\typeList, \externals A I \subA \typeListB, \externals {A'} J}
    { J \subseteq I
    & A_\indexVar \subA A_\indexVar'~\text{for}~\indexVar\in J
    }
  % Recursive
  \\ \infer=[\SubA{\m{Def}\Right}]{\typeList \subA \typeListB, t}
     {(t = A \in \eta) & \typeList \subA \typeListB, A}
  \and \infer=[\SubA{\m{Def}\Left}]{\typeList, t \subA \typeListB}
     {(t = A \in \eta) & \typeList, A \subA \typeListB}
\end{rules}

\subsection{Reinterpreting Choice}
\label{reinterpreting-choice}

In this section, we show that that intersections and unions are useful beyond their refinement interpretation, and help us understand external and internal choices better. Take external choice, for instance. A comparison between the typing rules for intersections and external choice reveal striking similarities. The only difference, in fact, is that internal choice has process-level constructs whereas intersections are implicit.

Consider the special case of binary external choice: $\external\braces{\m{inl} : A, \m{inr} : B}$. This type says: I will act as $A$ if you send me $\m{inl}$ \emph{and} I will act as $B$ if you send me $\m{inr}$. We know the \emph{and} can be interpreted as an intersection, and either side can be thought of as a singleton internal choice. A similar argument can be given for internal choice and unions. This gives us the following redefinitions of $n$-ary external and internal choices:
\begin{mathpar}
  \externals A I \defined \bigintersect_{\indexVar \in I}{\external\braces{\lab_\indexVar : A_\indexVar}} \\
  \internals A I \defined \bigunion_{\indexVar \in I}{\internal\braces{\lab_\indexVar : A_\indexVar}}
\end{mathpar}
%When $I$ is empty, the intersection reduces to $\top$ and the union reduces to $\bot$, which are elided from this paper for space considerations.
It is a straightforward calculation that these definitions satisfy the typing and subtyping rules for external and internal choices.

\section{Algorithmic System}
\label{algorithmic}

In this section, we show that subtyping and type-checking are decidable by designing an algorithm that takes in a (sub)typing judgment and produces true if and only if there is a derivation. Note that everything in the judgment is considered an input.

\subsection{Algorithmic Subtyping}

The subtyping judgment we gave is already mostly algorithmic (a necessity of working with coinductive rules), so we only have to tie up a few loose ends. The first is deciding which rule to pick when multiple are applicable. We apply $\SubA{\intersect\Right}$, $\SubA{\intersect\Left}$, $\SubA{\union\Right}$, $\SubA{\union\Left}$, $\SubA{\m{Def}\Right}$, $\SubA{\m{Def}\Left}$ eagerly since these are invertible. At some point, all types must be structural (since definitions are restricted to be contractive), at which point we non-deterministically pick a structural rule and continue. In the implementation, we backtrack over these choices.

Second, the coinductive nature of subtyping means we can (and often will) have infinite derivations. We combat this by using a cyclicity check (similar to the one in \cite{GayH05}): we maintain a context of previously seen subtyping comparisons and immediately terminate with success if we ever compare the same pair of sets of types again. Every recursive step corresponds to a rule, which ensures a productive derivation. We know there cannot be an infinite chain of new types due to the contractiveness restriction which implies an upper-bound on the size of the previously-seen set. A more formal treatment can be found in \cite{Stone05un}.

\subsection{Algorithmic Type-checking}

Designing a type checking algorithm is quite simple for the base system where we only have structural types (no recursive types or subtyping), since the form of the process determines a unique applicable typing rule. The $\cut$ rule causes a small problem since we do not have a type for the new channel. This is solved by requiring a type annotation when necessary such that the new term becomes $\tspawnType c {P_c} A {Q_c}$. The overwhelmingly common case where it is \emph{not} necessary is when $P_c$ is a defined process name $X$ because we simply fall back on its given type.

In the extended system with subtyping and property types, type-checking is trickier for two reasons: (1) subsumption can be applied anytime where one of the types in $A \le B$ can be anything (the other will be fixed due to typing rules, but one causes enough damage), and (2) intersection left and union right rules lose information which means they have to be applied non-deterministically. The latter issue is resolved by switching to a judgment where each channel (whether on the left or the right) is assigned a multiset of types.  These multisets are interpreted conjunctively for channels used (on the left) and disjunctively for the channel provided (on the right). This makes intersection left and union right rules invertible, so they can be applied eagerly.

The former problem is solved by checking subtyping only at the identity rule (forwarding). This relies on the subformula property for the sequent calculus, excepting only the cut rule which is annotated. The new judgment is written $\typeRecAJ P c \typeList$, where $\ctx$ and $\typeList$ are multisets. Typing rules are given in \cref{algorithmic-typing}. Note the explicit $\intersect\Left$, $\union\Right$, $\m{Def}\Left$, and $\m{Def}\Right$. These rules were derivable in the declarative system using subsumption, which is no longer possible since application of subsumption is restricted to forwarding processes.

\begin{rules}[algorithmic-typing]{Process Typing in the Algorithmic System}
  % Intersection
  \infer[\intersect\Right]{\typeRecAJR \ctx P c {A \intersect B, \typeList}}
    { \typeRecAJR \ctx P c {A, \typeList}
    & \typeRecAJR \ctx P c {B, \typeList}
    }
  \and \infer[\intersect\Left]{\typeRecAJR{\ctx, c : (\typeList, A \intersect B)}{P}{d}{\typeListB}}
    {\typeRecAJR{\ctx, c : (\typeList, A, B)}{P}{d}{\typeListB}}
  % Union
  \\ \infer[\union\Right]{\typeRecAJR \ctx P c {A \union B, \typeList}}
    {\typeRecAJR \ctx P c {A, B, \typeList}}
  \and \infer[\union\Left]{\typeRecAJR{\ctx, c : (\typeList, A \union B)} P d \typeListB}
    { \typeRecAJR {\ctx, c : (\typeList, A)} P d \typeListB
    & \typeRecAJR {\ctx, c : (\typeList, B)} P d \typeListB
    }
  % Recursive
  \\ \infer[\m{Def}\Right]{\typeRecAJ P c {t, \typeList}}
    { (t = A \in \eta) & \typeRecAJ P c {A, \typeList} }
  \and \infer[\m{Def}\Left]{\typeRecAJR {\ctx, c : (\typeList, t)} P d {\typeListB}}
    { (t = A \in \eta) & \typeRecAJR {\ctx, c : (\typeList, A)} P d {\typeListB} }
  % id and cut
  \\ \infer[\id]{ \typeRecAJR {c : \typeList} {\tfwd d c} {d} {\typeListB} }
    { \typeList \subA \typeListB }
  \and \infer[\cut]{ \typeRecAJR {\ctx, \ctx'} {\tspawnType c {P_c} A {Q_c}} {d} \typeList }
    { \typeRecAJR \ctx {P_c} {c} {A}
    & \typeRecAJR {\ctx', c : A} {Q_c} {d} {\typeList}
    }
  % Terminate
  \\ \infer[\terminate\Right]{\typeRecAJR{\emptyCtx}{\tclose c}{c}{\terminate, \typeList}}
   {}
  \and \infer[\terminate\Left]{\typeRecAJR{\ctx, c : (\typeList, \terminate)}{\twait c P} d \typeListB}
    { \typeRecAJR {\ctx} P d \typeListB}
  % Tensor
  \and \infer[\tensor\Right]{\typeRecAJR{\ctx, \ctx'}{\tsend c d {P_d} Q }{c}{A \tensor B, \typeList}}
    { \typeRecAJR \ctx P d A
    & \typeRecAJR {\ctx'} Q c B
    }
  \and \infer[\tensor\Left]{ \typeRecAJR{\ctx, c : (\typeList, A \tensor B)}{\trecv{d}{c}{P_d}}{e}{\typeListB} }
    { \typeRecAJR{\ctx, d : A, c : B}{P_d}{e}{\typeListB} }
  % Internal choice
  \and \infer[\internal\Right]{\typeRecAJR \ctx { \tselect{c}{i}{P} } {c} {\internals{A}{I}, \typeList }}
    { i \in I
    & \typeRecAJR \ctx {P}{c}{A_i}
    }
  \and \infer[\internal\Left]{ \typeRecAJR { \ctx, c : (\typeList, \internals{A}{I}) } { \tcase{c}{\tbranches{P}{J}} } {d} {\typeListB} }
   { I \subseteq J
   & \typeRecAJR{\ctx, c : A_k}{P_k}{d}{\typeListB}~\text{for}~k\in I
   }
  % Lolli
  \and \infer[\lolli\Right]{ \typeRecAJR{\ctx}{\trecv{d}{c}{P_d}}{c}{A \lolli B, \typeList} }
    { \typeRecAJR{\ctx, d : A}{P_d}{c}{B} }
  \and \infer[\lolli\Left]{\typeRecAJR{\ctx, \ctx', c : (\typeList, A \lolli B)}{ \tsend{c}{d}{P_d}{Q} } {e}{\typeListB}}
    { \typeRecAJR{\ctx}{P_d}{d}{A}
    & \typeRecAJR{\ctx', c : B}{Q}{e}{\typeListB}
    }
  % External choice
  \and \infer[\external\Right]{ \typeRecAJR \ctx { \tcase{c}{\tbranches{P}{I}} } {c} {\externals{A}{J}, \typeList} }
   { J \subseteq I
   & \typeRecAJR \ctx {P_k}{c}{A_k}~\text{for}~k\in J
   }
  \and \infer[\external\Left]{\typeRecAJR{\ctx, c : (\typeList, \externals{A}{I})} { \tselect{c}{i}{P} } {d} {\typeListB}}
    { i \in I
    & \typeRecAJR{\ctx, c : A_i}{P}{d}{\typeListB}
    }
  \and \infer[\m{def}]{ \typeRecAJ {\emptyset} {X} {d}{A} }
    { X = P :: (c : A) \in \recCtx }

\end{rules}

\subsection{Equivalence to the Declarative System}

Next, we show that the algorithmic system is sound and complete with respect to the declarative system (modulo erasure of type annotations, which we denote by $\erase P$). Due to space limitations, we can only give very brief proof sketches here. Interested readers are referred to the first author's thesis \cite{Acay16}.  We define $\any\typeList$ as the union of all the types in $\typeList$, and similarly for $\all\typeList$.  For contexts
we define $\all(c_1{:}\typeList_1, \ldots, c_n{:}\typeList_n) = c_1{:}\all\typeList_1, \ldots, c_n{:}\all\typeList_n$.

\begin{theorem}[Soundness of Algorithmic Typing]
  If $\typeRecAJ P c \typeList$, then $\typeRecD {\all\ctx} {\recCtx} {\erase P} c {\any \typeList}$.
\end{theorem}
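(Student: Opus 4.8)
\section*{Proof proposal}

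The plan is to proceed by structural induction on the derivation of the algorithmic judgment $\ctx \vdash P :: (c : \typeList)$, showing in each case that $\all\ctx \vdash_\recCtx \erase P :: (c : \any\typeList)$ is derivable in the declarative system. Before the induction I would record two facts about subtyping. The first is a \emph{folding lemma}: if $\typeList \subA \typeListB$ then $\all\typeList \sub \any\typeListB$, obtained by running the invertible rules $\SubA{\intersect}\Left$ and $\SubA{\union}\Right$ to collapse the left multiset into a single intersection and the right multiset into a single union. The second is the pair of \emph{distributivity laws} $(A \union C) \intersect (B \union C) \sub (A \intersect B) \union C$ and $A \intersect (B \union C) \sub (A \intersect B) \union (A \intersect C)$, together with the routine facts (reflexivity, associativity and commutativity of $\intersect$ and $\union$, and congruence) that $\subA$ validates because its structural rules discard the spectator types on each side, giving weakening for free and leaving only a coinductive matching obligation.

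With these in hand the property rules are the interesting cases. For $\intersect\Right$ the two premises give, by induction, $\all\ctx \vdash \erase P :: (c : A \union \any\typeList)$ and $\all\ctx \vdash \erase P :: (c : B \union \any\typeList)$; the declarative $\intersect\Right$ combines them into the type $(A \union \any\typeList) \intersect (B \union \any\typeList)$, and one use of the first distributivity law under right subsumption rewrites this to $(A \intersect B) \union \any\typeList = \any(A \intersect B, \typeList)$. Dually, in $\union\Left$ the channel on the left carries $\all\typeList \intersect (A \union B)$; the second distributivity law under left subsumption turns it into $(\all\typeList \intersect A) \union (\all\typeList \intersect B)$, after which the declarative $\union\Left$ splits the goal into precisely the two subgoals furnished by the induction hypotheses. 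The rules $\intersect\Left$, $\union\Right$, $\m{Def}\Left$ and $\m{Def}\Right$ are then mere reassociations or single unfoldings: each changes $\all\typeList$ or $\any\typeList$ into a subtype-equivalent form (or folds a definition into its name via the $\m{Def}$ subtyping rules), and is discharged by subsumption.

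For the structural rules the single types introduced by the rule ($A$, $B$, the $A_i$, and so on) are carried as singleton multisets, so $\all$ and $\any$ leave them unchanged and each induction hypothesis is already a declarative judgment on them. The corresponding declarative structural rule then applies, and the only extra work is to absorb the spectator multiset: on the right a leftover $\typeList$ is absorbed into a union by right subsumption ($B \sub B \union \any\typeList$), and on the left the type $\all\typeList \intersect (\,\cdot\,)$ is cut down to its structural component by left subsumption ($\all\typeList \intersect C \sub C$). The $\id$ rule is where the folding lemma is used: from its side condition $\typeList \subA \typeListB$ we get $\all\typeList \sub \any\typeListB$, instantiate the declarative $\id$ at $\all\typeList$, and finish by right subsumption. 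The $\cut$ case reduces to the declarative $\cut$ — here $\erase$ drops the annotation and the type carried across the cut is a singleton, so $\any$ and $\all$ act as the identity on it — and the $\m{def}$ case is immediate since $\any A = A$.

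The main obstacle is establishing the two distributivity laws inside the coinductive multiple-conclusion relation $\subA$, as these are exactly the inequalities that fail for the naive single-conclusion subtyping and motivated \cref{subtyping-multi}; their derivations decompose both sides to structural types and match them, relying on admissibility of reflexivity and of the spectator-discarding structural rules, and must be checked to remain productive (guarded) under the coinductive reading, following the same cyclicity argument used for algorithmic subtyping. Everything else is careful bookkeeping of how $\all$ and $\any$ interact with the spectator multisets and verifying that every absorption step is a genuine instance of subsumption.
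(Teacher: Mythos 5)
Your proposal takes essentially the same approach as the paper: induction on the algorithmic typing derivation, with $\intersect\Right$ and $\union\Left$ isolated as the only non-routine cases and discharged by the distributivity of intersection and union over each other under subsumption---precisely the laws the multiple-conclusion subtyping relation was designed to validate. Your added detail (the folding lemma for the $\id$ case and the absorption of spectator multisets via subsumption) is a faithful elaboration of the paper's sketch rather than a different route.
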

\begin{proof}
  By induction on the typing derivation. The only non-straightforward cases are $\intersect\Right$ and $\union\Left$, which depend on the distributivity of intersection and union over each other (which is one of the reasons why we insisted such be the case while designing the subtyping relation).
\end{proof}

\begin{lemma}[Completeness of Delayed Subtyping]
  \label{algorithmic:delegation-sub}
  The following are admissible:
  \begin{itemize}
    \item If $\typeRecAJ P c \typeList$ and $\any\typeList \subA \typeListB$ then $\typeRecAJ P c {\typeListB}$.
    \item If $\typeRecAJR {\ctx, d : \typeList} P c \typeListB$ and $\typeList' \subA \all\typeList$ then $\typeRecAJR {\ctx, d : \typeList'} P c \typeListB$.
  \end{itemize}
  Note that the type annotations in $P$ stay the same.
\end{lemma}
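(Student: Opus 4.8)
The plan is to prove the two statements together, by a single induction on the algorithmic typing derivation of $P$, reading each subtyping hypothesis through the invertibility of the property rules of $\subA$. Since $\any\typeList$ is a union, repeated inversion of $\SubA{\union}\Left$ turns the hypothesis $\any\typeList \subA \typeListB$ of the first part into a family of subtypings $A \subA \typeListB$, one for each $A \in \typeList$; dually, inversion of $\SubA{\intersect}\Right$ turns $\typeList' \subA \all\typeList$ of the second part into $\typeList' \subA B$ for each $B \in \typeList$. The one external ingredient I would assume is transitivity of $\subA$ (admissibility of cut for the subtyping calculus, itself proved by coinduction); it is used exactly once, at the $\id$ rule, where the stored side condition $\ldots \subA \typeList$ must be composed with the incoming $\any\typeList \subA \typeListB$ to re-establish the side condition against $\typeListB$.

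Most cases are commuting cases. For the first part, the $\cut$ rule and every left rule thread the provided multiset through unchanged, so I would apply the induction hypothesis to the premise carrying it and reapply the rule; the invertible right rules $\union\Right$ and $\m{Def}\Right$ are equally direct, because the per-summand decomposition above already exposes exactly the pieces each premise needs. When the last rule is a structural right rule the process has committed to a single structural type $S \in \typeList$ (for instance $A \tensor B$ under $\tensor\Right$), discarding the other summands; here I would invert the derivation of $S \subA \typeListB$ --- a finite inversion, since contractiveness bounds the stretch of invertible rules before the matching structural rule --- to locate a compatible structural type inside $\typeListB$ together with subtypings of its components, subsume those components by the induction hypothesis on the smaller typing derivations of the continuations, and reassemble $\typeListB$ with right rules. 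The second part is entirely dual, interchanging left and right and interchanging $\intersect$ with $\union$.

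The crux is the $\intersect\Right$ case of the first part (and, dually, the $\union\Left$ case of the second). There the two premises record $P$ at $(c : A, \typeList_0)$ and at $(c : B, \typeList_0)$, each keeping only half of $A \intersect B$, and neither $A \subA \typeListB$ nor $B \subA \typeListB$ is a consequence of $(A \intersect B) \subA \typeListB$; a single-premise appeal to the induction hypothesis is therefore impossible --- this is the familiar ``losing half the information'' obstruction. To close it I would run a secondary induction that peels the top connective off $\typeListB$ using the invertible right rules of $\subA$ (matched step for step by the typing rules $\union\Right$, $\intersect\Right$, $\m{Def}\Right$), threading the residual disjuncts $\typeList_0$ and terminating by contractiveness; once $\typeListB$ is fully structural the two premises must finally be recombined, and it is exactly here that I would invoke the admissible distributivity $(A \intersect B) \union C \typeeq (A \union C) \intersect (B \union C)$ that the multi-conclusion relation was designed to validate. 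I expect this recombination --- letting both premises feed a single structural conclusion without re-expanding $A \intersect B$ and looping --- to be the main obstacle, and the precise place where the earlier insistence on distributivity in $\subA$ pays off.
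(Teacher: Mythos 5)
Your overall skeleton matches the paper's: the paper proves this lemma by lexicographic induction, first on the structure of $P$ and then on the combined sizes of the involved types, and your primary induction on the typing derivation together with a secondary, contractiveness-bounded induction that peels property connectives realizes essentially that measure (the secondary induction is mandatory, as you note, because $\subA$ is coinductive and one cannot induct on its derivations). Assuming transitivity of $\subA$, established separately by coinduction, to compose the stored premise of $\id$ with the incoming subtyping is likewise legitimate and is exactly where that ingredient is needed. Your commuting cases, and the structural-right-rule case via bounded inversion to a matching structural type in $\typeListB$, are fine, including the use of the second bullet for the contravariant position of $\lolli$.

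The gap is in your crux case, and you half-sense it yourself. The appeal to the distributivity equivalence $(A \intersect B) \union C \typeeq (A \union C) \intersect (B \union C)$ cannot be cashed out as a proof step: applying a type equivalence to an algorithmic typing judgment is precisely the subsumption admissibility this lemma is proving (the algorithmic system permits subtyping only at $\id$), so the proposed ``recombination'' of the two premises is circular as stated. (Distributivity is what the paper invokes in the $\intersect\Right$ and $\union\Left$ cases of the \emph{soundness} theorem, which is likely where you picked it up; here it is a motivation for the multiset design, not a callable lemma.) What actually closes the case, consistently with the stated measure, is this: after peeling $\typeListB$ and inverting $(A \intersect B) \union \any\typeList_0 \subA \typeListB$ into $A, B \subA \typeListB$ together with $M \subA \typeListB$ for each $M \in \typeList_0$, keep analyzing the subtyping derivation --- invertible left rules on $A$ or $B$ commute with inversion of the corresponding typing premise --- until a structural rule fires. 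The structural rules of \cref{subtyping-multi} select exactly \emph{one} type on the left, so re-applying the same rule instance with the unpicked type deleted (weakening of the side multiset) yields $A \subA \typeListB$ or $B \subA \typeListB$ outright; reassembling this with the $\typeList_0$ sequents by $\SubA{\union}\Left$ gives $A \union \any\typeList_0 \subA \typeListB$ (resp.\ for $B$), and the outer induction hypothesis applied to that \emph{single} premise --- same $P$, strictly smaller types --- finishes. No two-premise recombination ever occurs: the payoff of the multiset design is that $\typeList_0$ is retained alongside, so at the structural commitment one of the two $\intersect\Right$ premises can simply be discarded. The dual repair handles $\union\Left$ in the second bullet.
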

\begin{proof}
  By lexicographic induction, first on the structure of $P$, then on the combined sizes of involved types.
\end{proof}

\begin{theorem}[Completeness of Algorithmic Typing]
  If $\typeRecDJ P c A$, then there exists $P'$ such that $\erase{P'} = P$ and $\typeRecAJ {P'} c A$.
\end{theorem}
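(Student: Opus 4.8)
The plan is to induct on the declarative derivation of $\typeRecDJ P c A$, reading each single type as a singleton multiset so that the target judgment is $\typeRecAJ {P'} c A$ with the context $\ctx$ turned into singleton multisets. The cut type is the only datum the algorithmic system needs that is not already visible in $P$, and it is present in the $\cut$ premises; so whenever I translate a $\cut$ I reinsert that type as the annotation, forming $\tspawnType c {P'_c} A {Q'_c}$, which is exactly what keeps $\erase{P'} = P$. The genuinely structural rules ($\cut$, $\id$, $\terminate$, $\tensor$, $\lolli$, $\internal$, $\external$) and $\m{def}$ then translate one-for-one: apply the induction hypotheses and reassemble with the matching algorithmic rule, preserving erasure and keeping every multiset a singleton.

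The declarative uses of subsumption, together with the rules the excerpt flags as derivable by subsumption ($\intersect\Left$, $\union\Right$, and the folding/unfolding $\m{Def}$ steps), carry no computational content and are absorbed by \cref{algorithmic:delegation-sub}. After the induction hypothesis gives $\typeRecAJ {P'} c {A'}$, a subtyping step $A' \sub A$ — which is just $A' \subA A$ on singletons — lets me move to $\typeRecAJ {P'} c A$ by the first clause of that lemma, and dually on the left by the second clause, all with the annotations of $P'$ left untouched. This is precisely what \cref{algorithmic:delegation-sub} was built for.

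The real work is in $\intersect\Right$ and $\union\Left$. Each has two premises typing the \emph{same} process $P$ — at $c:A$ and $c:B$ on the right for $\intersect\Right$, with $c:A$ and $c:B$ on the left for $\union\Left$ — whereas the corresponding algorithmic rules insist on a \emph{single} annotated term in both premises. The two induction hypotheses, however, return terms $P'_1, P'_2$ that agree only after erasure; their cut annotations may differ. I therefore need a pair of dual merging lemmas: given algorithmic derivations of $P'_1$ and $P'_2$ with $\erase{P'_1} = \erase{P'_2}$, construct one $P'$ of that erasure typable with the pointwise \emph{meet} of the two contexts and conclusions (the meet version) and, dually, with pointwise \emph{joins} (the join version). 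Applying the meet lemma to the premises of $\intersect\Right$ yields $\typeRecAJR \ctx {P'} c {A \intersect B}$ directly, and applying the join lemma to the premises of $\union\Left$ yields $\typeRecAJR {\ctx, c : A \union B} {P'} d D$ (using $D \union D \typeeq D$ to tidy the conclusion); these close the two cases.

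Proving the merge lemmas — and, within them, the $\cut$ case — is the main obstacle. The tempting but wrong move is to keep the two typings side by side and pick a common cut annotation by subtyping alone: the provider of the new channel then forces the annotation upward while its user forces it downward, and no single type works. The key to dissolving this tension is to combine into a \emph{single} meet (resp.\ join) judgment rather than a pair: meeting the two annotations $A_1, A_2$ into $A_1 \intersect A_2$ (resp.\ joining into $A_1 \union A_2$) produces an annotation that provider and user see \emph{uniformly}, so the $\cut$ reassembles with no subtyping slack at all. The remaining cases go through because, subsumption having been confined to $\id$ (where the forwarder carries no annotation, so the two derivations already coincide), the property rules are invertible and can be normalised so that the two derivations agree on their last rule up to those invertible steps; \cref{algorithmic:delegation-sub} is then used only to tidy leaves and to realign a channel's multiset to its meet (resp.\ join) before recursing. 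The induction is lexicographic, first on the shared process and then on the combined sizes of the types involved, mirroring \cref{algorithmic:delegation-sub}.
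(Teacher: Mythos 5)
Your skeleton coincides with the paper's: induction on the declarative derivation, with \cref{algorithmic:delegation-sub} absorbing $\irb{Sub}\Right$ and $\irb{Sub}\Left$ (and hence the derivable $\intersect\Left$, $\union\Right$, and $\m{Def}$ steps), and annotations reinserted at $\cut$ from its premises. Where you go beyond the paper's one-line sketch is in the $\intersect\Right$ and $\union\Left$ cases: you correctly observe that the two induction hypotheses return terms $P'_1, P'_2$ that agree only after erasure, while the algorithmic rules demand a single annotated term in both premises, and you supply a merging construction (intersecting the two cut annotations, then re-aligning via \cref{algorithmic:delegation-sub} and the invertibility of the property rules, by lexicographic induction on the term and type sizes). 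This is a genuine obstacle that the paper silently defers to the full development in \cite{Acay16}, and your meet-annotation idea is the right way to dissolve it: the provider premise of the merged cut is handled by recursively merging and applying $\intersect\Right$, and each user premise by the left clause of \cref{algorithmic:delegation-sub}, which leaves annotations untouched. One simplification you missed: a single merge lemma suffices, stated as ``given $\typeRecAJR{\ctx_1}{P'_1}{c}{\typeList_1}$ and $\typeRecAJR{\ctx_2}{P'_2}{c}{\typeList_2}$ with $\erase{P'_1} = \erase{P'_2}$, there is one $P'$ of that erasure deriving \emph{both} judgments.'' Instantiated at the premises of $\union\Left$ this hands you exactly the two premises of the algorithmic $\union\Left$ rule for the same term, so the dual join-annotation lemma and the $D \union D \typeeq D$ tidying step are unnecessary; the meet construction alone (which never weakens what an annotation promises, only strengthens it uniformly for provider and client) closes both two-premise cases. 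Apart from this avoidable duplication, your argument is sound and is in fact a more honest account of the proof than the sketch the paper gives.
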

\begin{proof}
  By induction on the typing derivation, using \cref{algorithmic:delegation-sub} for $\irb{Sub}\Right$ and $\irb{Sub}\Left$.
\end{proof}

\section{Metatheory}
\label{metatheory}

Our main contribution is proving type safety for the system with intersections and unions, which we do so by showing the standard progress and preservation theorems, renamed to deadlock freedom and session fidelity, respectively, within this context. Since the algorithmic system is more well behaved (no subsumption), we use the algorithmic judgment in the statements and proofs of these results. Type safety for the declarative system follows from its equivalence to the algorithmic system. We only state the theorems here. Full proofs can be found in~\cite{Acay16}.

In a functional setting, progress states a well-typed expression either takes a step or is a value. The corresponding notion of a value is a \emph{poised} configuration. A configuration is poised if every process in it is, and a process is poised if it is waiting to communicate with its client. With this definition, we can state the progress theorem:

\begin{theorem}[Progress]
If $\providesCtx \config \ctx$ then either
\begin{enumerate}
  \item $\steps{\config}{\config'}$ for some $\config'$, or
  \item $\config$ is poised.
\end{enumerate}
\end{theorem}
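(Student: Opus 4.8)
The plan is to prove progress by induction on the derivation of $\providesCtx \config \ctx$, using the forest structure that the configuration typing rules impose. The rule $\confZero$ yields the empty configuration, which is vacuously poised. The rule $\confN$ merely juxtaposes independent subconfigurations, so by the induction hypothesis each either steps --- whence $\config$ steps --- or is poised, and if all are poised then so is their union. The substance of the argument is the case $\confOne$, where $\config$ has the form $\config', \proc{c}{P}$ with $\proc{c}{P}$ the root process offering along $c$ and using exactly the channels that the subconfiguration $\config'$ provides.

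I would first apply the induction hypothesis to $\config'$: if $\config'$ steps then $\config$ steps, so assume $\config'$ is poised and analyze the syntactic form of the root process $P$. If $P$ is a cut $\tspawn{x}{P_x}{Q_x}$, a forward $\tfwd{c}{d}$, or an invocation of a process name, then $P$ takes an independent step and $\config$ steps. If the next action of $P$ is along the channel $c$ it offers --- closing, sending, receiving, selecting, or branching with subject $c$ --- then $P$ is itself poised, and since $\config'$ is poised the whole configuration is poised. The only remaining possibility is that $P$'s next action is along some channel $d$ that it uses. In that case the provider of $d$ is one of the processes of $\config'$ and, being poised, is waiting to communicate along the very channel $d$ that it offers; I then argue that the two actions constitute a matching redex and fire the corresponding communication rule, so that $\config$ steps.

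The crux, and the step I expect to be the main obstacle, is this final matching argument. It is a typed-inversion lemma: a process whose next action is on a used channel $d$ together with the poised provider of $d$ must perform complementary structural actions and therefore reduce together. Since the user and the provider agree on the type the context assigns to $d$, and since the only reductions that consume a channel are the structural communication rules, inversion on the typing of both processes should exhibit dual forms in each structural case (the $\terminate$ type pairing a close with a wait, $\tensor$ and $\lolli$ pairing a send with a receive, and the choice types pairing a selection with a branching). The delicate point is that, in the presence of the property types $\intersect$ and $\union$, the type of $d$ need not be structural on the nose. I would handle this by observing that the invertible left and right rules for intersection and union never correspond to communication and are wholly absorbed into the multiset of types decorating each channel; hence at the moment $P$ is poised on $d$ the relevant multiset must already contain a structural type that drives the action, and the same structural type is available to the poised provider. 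Once this matching lemma is established, the remaining cases are routine bookkeeping.
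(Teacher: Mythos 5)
Your proposal matches the paper's proof: the paper argues by induction on $\providesCtx \config \ctx$ with a nested induction on the typing of the root process in the $\confOne$ case, using inversion on the client's typing when two processes must communicate --- which is exactly your outer induction, case analysis on the root, and typed-inversion matching lemma, including the correct observation that the algorithmic multiset judgment absorbs the non-communicating $\intersect$/$\union$ rules so that a structural type drives each interaction. Your identification of the dual-redex matching step as the crux is precisely where the paper's sketch also concentrates its effort.
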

\begin{proof}
  By induction on $\providesCtx \config \ctx$ followed by a nested induction on the typing of the root process for the $\confOne$ case. When two processes are involved, we also need inversion on client's typing.
\end{proof}

\begin{theorem}[Preservation]
If $\providesCtx \config \ctx$ and $\steps{\config}{\config'}$ then $\providesCtx {\config'} \ctx$.
\end{theorem}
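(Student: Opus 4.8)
The plan is to prove preservation by case analysis on the operational step $\steps{\config}{\config'}$: in each case I would invert the configuration typing $\providesCtx \config \ctx$ to recover the typings of the one or two processes that react, and then reassemble a derivation of $\providesCtx {\config'} \ctx$ from the typings of their continuations. This is the familiar subject-reduction skeleton — locate the redex, invert, substitute, rebuild — adapted to the forest-structured configuration judgment.

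First I would set up the structural infrastructure. Because $\providesCtx \config \ctx$ is built as a forest via $\confZero$, $\confOne$, and $\confN$, I need a decomposition lemma that, given $\proc c P \in \config$, exposes the process typing of the reacting node ($c$ offering some $A$) together with a typing of the remainder of the forest, and that similarly exposes a provider--client pair $\proc c P, \proc e Q$ sharing an internal channel $c$. I would also rely on the standard renaming and substitution lemmas: that process typing is stable under injective renaming of channels. These are invoked for the fresh $a$ in the $\cut$, $\tensor$, and $\lolli$ steps, for the $[c/d]$ in definition unfolding, and for the global $[d/c]$ rewrite in forwarding. Both lemmas are proved by routine inductions on typing, and I would assume them.

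Then I would treat the reduction rules one by one. For $\cut$, inversion on the $\cut$ typing rule yields that $P_c$ offers $c{:}A$ and $Q_c$ offers the original interface while using $c{:}A$; after renaming $c$ to the fresh $a$, two applications of $\confOne$ rebuild the forest with $\proc a {P_a}$ and $\proc c {Q_a}$. Definition unfolding uses the signature check $\emptyset \vdash_\recCtx P_c :: (c{:}A)$ directly, modulo the $[c/d]$ renaming. For each communication step I invert the matching right and left rules on the provider and the client ($\tensor\Right$/$\tensor\Left$, $\internal\Right$/$\internal\Left$ using $i \in I$ to pick the branch, and dually for $\lolli$ and $\external$), rename the transmitted channel to the fresh name, and reassemble with $\confOne$ and $\confN$; the $\terminate$ case merely discards the closed channel. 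Forwarding is the one step that rewrites all of $\config$, but since the forwarder offers $c{:}A$ and consumes $d{:}A$ with the \emph{same} $A$, the global substitution $[d/c]$ preserves every typing it touches by the renaming lemma.

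The main obstacle is the inversion step in the presence of the property types. Unlike the base system, where each channel carries a single structural type and inversion immediately exposes the matching right/left rule, here a reacting channel may present its typing through $\intersect\Right$, $\union\Right$, $\union\Left$, or the $\m{Def}$ rules before the structural rule governing the actual communication is reached. I therefore expect the crux to be a \emph{structural inversion lemma}: whenever a process is poised to act on $c$, its typing can be peeled down through the intersection, union, and definition layers to the corresponding structural rule, and the structural forms seen by the two communicating partners must be shown to agree so that the continuations' contexts line up. Establishing this alignment — rather than the per-rule bookkeeping, which is mechanical — is where the real work lies, and it is exactly the point at which invertibility of the property-type rules together with linearity and the sequent structure (the same ingredients that make the $\union\Left$ rule sound here) rescue the argument that would otherwise fail.
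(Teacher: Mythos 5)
Your overall skeleton---inversion on $\steps{\config}{\config'}$, a decomposition lemma for the forest-structured configuration typing, renaming lemmas for fresh channels, then inversion on the typings of the one or two reacting processes---is the same as the paper's proof, which is summarized as ``by inversion on $\steps{\config}{\config'}$, followed by induction on the typing judgments of the involved processes''; your ``structural inversion lemma'' is essentially that nested induction packaged as a lemma. But the paper makes one move you missed, and it is the one that makes the plan go through: it states and proves preservation for the \emph{algorithmic} judgment $\typeRecAJ P c \typeList$, not the declarative one, precisely because in the declarative system the subsumption rules $\irb{Sub}\Right$ and $\irb{Sub}\Left$ can appear anywhere in a derivation. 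Over declarative derivations, your peeling lemma must cut through not just the invertible $\intersect\Right$, $\union\Left$, and $\m{Def}$ layers (which are indeed benign) but arbitrary subsumption steps, each importing a coinductive subtyping derivation $A' \sub A$; commuting those past the structural rule so that the provider's and client's structural forms ``line up'' is much more than invertibility of the property-type rules---it is essentially the content of the equivalence with the algorithmic system. The paper sidesteps this by confining subtyping to the $\id$ rule in the algorithmic system and recovering declarative safety from the soundness and completeness theorems of \cref{algorithmic}.

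There is also a concrete gap in your forwarding case. You argue that since the forwarder offers $c{:}A$ and consumes $d{:}A$ ``with the \emph{same} $A$,'' the global substitution $[d/c]$ preserves typing by the injective-renaming lemma alone. That is only true in the subsumption-free base system. In the systems at hand the identity rule carries subtyping: declaratively, subsumption can wrap $\id$ so the forwarder can be typed with $d{:}B$ while offering $c{:}A$ for any $B \le A$; algorithmically, the $\id$ rule has the premise $\typeList \subA \typeListB$ built in. After the step, the client---typed against the larger type---is wired directly to a provider of a smaller type, and mere renaming does not retype it. You need exactly \cref{algorithmic:delegation-sub} (Completeness of Delayed Subtyping) to push that subtyping into the client's derivation without changing its annotations (or, declaratively, an application of $\irb{Sub}\Left$). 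So the forwarding case fails as written, though the repair is available and is one of the reasons the paper develops the delayed-subtyping lemma before the metatheory.
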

\begin{proof}
  By inversion on $\steps{\config}{\config'}$, followed by induction on the typing judgments of the involved processes.
\end{proof}

\section{Related Work and Conclusion}
\label{conclusion}

Padovani describes a calculus similar to ours~\cite{Padovani10} where he interprets internal and external choices as union and intersection, respectively. This resembles what we did in \cref{reinterpreting-choice} except we keep singleton choices at the type and term levels to maintain the connection to linear logic, whereas Padovani is able to remove them completely since his calculus is based on \textsc{CSS}~\cite{NicolaH87}. While we give axiomatic rules for (sub)typing, he takes a semantic approach, which we believe is complementary to our work. However, semantic definitions make deriving algorithmic rules harder and he leaves this as future work. More significantly, Padovani does not consider higher-order types and processes, which means it is not possible to communicate channels along channels. Moreover, his calculus only deals with the interaction between two processes that are required to have matching (or dual) types and behaviors (for example, if one sends, the other must receive etc.). We consider a tree of processes where each process can use many providers (as long as it respect the behavior along their channels) and even spawn new ones. For example, a client using two providers could communicate with one of them while the other is idle, or ignore both altogether and only communicate with \emph{its} client.

Castagna et al.\ describe a generic framework~\cite{CastagnaDGP09} that make use of set operations (intersection, union, and negation) for sessions and take a semantic approach as well. Their framework has the advantage that it is agnostic to the underlying functional language. There are descriptions of our base calculus that take a similar approach~\cite{ToninhoCP13} and it should not be too hard to extend them to cover our contributions. Contrary to Padovani, Castagna et al.\ have higher order sessions and give algorithms to decide all semantic relations they describe. However, their system and presentation are significantly different from ours because of their semantic emphasis, inclusion of negation (which makes their session language Boolean), and treatment of process composition (which is closer to Padovani's).  In particular, they do not have a general primitive for spawning new processes or forwarding.

We introduced intersections and unions to a simple system of session types, and demonstrated how they can be used to refine behavioral specifications of processes. Some aspects that would be important in a full accounting of the system are omitted for the sake of brevity or are left as future work. For example, integrating an underlying functional language \cite{ToninhoCP13}, adding shared channels \cite{CairesP10,PfenningG15}, or considering asynchronous communication \cite{DeYoungCPT12,PfenningG15,Griffith16phd} are straightforward extensions based on prior work. In addition, it would be very useful to have behavioral polymorphism \cite{CairesPPT13} and abstract types. Their interaction with subtyping, intersections, and unions is an interesting avenue for future work.

\paragraph{Acknowledgments.}
This work was funded in part by NSF grant CNS1423168 and by the FCT (Portuguese Foundation for Science and Technology) through the Carnegie Mellon Portugal Program.  We would like to thank the anonymous referees for their many helpful suggestions.

%%
%% Bibliography
%%

\bibliographystyle{eptcs}
\bibliography{bibliography,db}

%%
%% Appendix
%%

\end{document}